\setlist{nosep}
\def\figdir{Pictures/}
\newcommand{\newautotheorem}[3] 
{
\newaliascnt{#1}{#2}
\newtheorem{#1}[#1]{#3}
\aliascntresetthe{#1}
\expandafter\def\csname #1autorefname\endcsname{%
#3%
}%
}
\newtheorem{theorem}{Theorem}
\theoremstyle{definition}
\newcommand{\R}{\mathbb{R}}
\newcommand{\dt}{\,\mathrm{d}t}
\newcommand{\Pol}{\operatorname{Pol}}
\newcommand{\PolHat}{\widehat{\Pol}(n)}
\newcommand{\APol}{\widehat{\operatorname{APol}}}
\newcommand{\Vol}{\operatorname{Vol}}
\newcommand{\SO}{\operatorname{SO}}
\newcommand{\sinc}{\operatorname{sinc}}
\newcommand{\newmethod}{progressive action-angle method}
\newcommand{\ceq}{\coloneqq}
\let\mgp=\marginpar \marginparwidth18mm \marginparsep1mm
\def\marginpar#1{\mgp{\raggedright\tiny #1}}
\let\lbl=\label
\def\label#1{\lbl{#1}\ifinner\else\marginpar{\ref{#1} #1}\ignorespaces\fi}
\begin{document}
\title{A faster direct sampling algorithm for equilateral closed polygons and the probability of knotting}
\author{Jason Cantarella}
\altaffiliation{Mathematics Department, University of Georgia, Athens, GA, USA}
\noaffiliation
\author{Henrik Schumacher}
\altaffiliation{Mathematics Department, University of Georgia, Athens, GA, USA}
\noaffiliation
\author{Clayton Shonkwiler}
\altaffiliation{Department of Mathematics, Colorado State University, Fort Collins, CO, USA}
\email{clayton.shonkwiler@colostate.edu}
\noaffiliation

\keywords{closed random walk, random polygon, random knot, polymer models}

\begin{abstract}
We present a faster direct sampling algorithm for random equilateral closed polygons in three-dimensional space. This method improves on the moment polytope sampling algorithm of Cantarella, Duplantier, Shonkwiler, and Uehara~\cite{Cantarella:2016bt} and has (expected) time per sample quadratic in the number of edges in the polygon. We use our new sampling method and a new code for computing invariants based on the Alexander polynomial to investigate the probability of finding unknots among equilateral closed polygons.
\end{abstract}
\date{\today}
\maketitle

\section{Introduction}

The tendency of long, flexible structures to become entangled is familiar to anyone who has managed an electrical extension cord, carried headphones in their pockets, or brushed a child's hair. This phenomenon has been studied for several decades in statistical physics~(cf.~\cite{Orlandini:2007kn}). A~standard ensemble for these investigations is the space of equilateral random polygons. These are polygonal walks in 3-space forming closed loops and consisting of unit-length steps. The closure condition imposes subtle global correlations between edge directions, which means it is not obvious how to generate random equilateral polygons. Indeed, algorithms have been proposed for at least 4~decades \cite{Anonymous:2010p2603,Cantarella:2013wl,Klenin:1988dt,MR95g:57016,Moore:2004ds,Varela:2009cda,Vologodskii:1979ik,Diao:2011ie,Diao:wt,Diao:2012dza,Moore:2005fh}, though most are numerically unstable or have not been proved to sample from the correct probability distribution.

In previous work~\cite{Cantarella:2016bt}, we introduced the \emph{action-angle method} (AAM), which is a numerically stable and provably correct algorithm for generating random equilateral $n$-gons in $\R^3$ based on rejection sampling the hypercube. The action-angle method is the fastest extant method: it produces samples in expected time $\Theta(n^{5/2})$. Xiong et al.~\cite{Xiong2021} used large-scale computing resources to sample $1.6 \times 10^9$ polygons using AAM, classifying their knot type with three invariants based on the Alexander polynomial. These invariants can be computed extremely quickly using sparse matrix methods (we estimate $O(n^{1.18})$, see \autoref{fig:timing}) and AAM is faster than invariant computation only for $n < 280$. In this paper, we give an improved algorithm, the~\emph{progressive} action angle method (PAAM), which we prove produces samples in $\Theta(n^2)$ time. Empirically, PAAM is faster than AAM for $n > 20$ and faster than invariant computation for $n < 1850$. 

One of the most interesting findings in~\cite{Xiong2021} concerns the probability $P_{0_1}(n)$ of finding an unknotted equilateral polygon among random polygons. Diao~\cite{Diao1995} proved that $P_{0_1}(n)$ is at most exponential in $n$, and it had been widely assumed that (up to subdominant terms) that $P_{0_1}(n) \simeq C_{0_1} \exp(-n/n_{0_1})$ for some constants $C_{0_1}$ and $n_{0_1}$~\cite{MillettRawdonUniversal}. (To be precise, expressions of the form $P_{0_1}(n) \simeq C_{0_1} n^{v_0} \exp(-n/n_K)$ appeared early on in the literature (cf. \cite[eq.\ (1.4)]{des_cloizeaux_topological_1979},~\cite[eq.\ (2)]{michels_probability_1982}, \cite[eq.\ (1.3)]{deguchi_statistical_1994}) but the power $v_0$ was thought to be close to zero based on the data sets available at the time.) This made the unknot an anomaly, since every other knot type studied fit to the general form $P_K(n) \simeq C_K n^{v_K} \exp(-n/n_K)$. Xiong et al.~\cite{Xiong2021} presents strong evidence that the probability of a given knot type is in the form 
\[
P_K(n) \simeq C_K n^{v_0 + n_p(K)} \exp(-n/n_0) [1 + \beta_K n^{-\frac{1}{2}} + \gamma_K n^{-1} ],
\]
where $v_0 \simeq -0.19 \pm 0.001$ and $n_0 \simeq 259.3 \pm 0.2$ are constants that do~\emph{not} depend on the knot type, and $n_p(K)$ is the number of prime components of the knot type. Deguchi and Uehara~\cite{DeguchiUehara2020} proposed a very similar model (with a different finite-size correction) and, as Xiong et al.\ point out, a model of a similar form was proposed for self-avoiding polygons on lattices by Orlandini et al.~\cite{orlandini_asymptotics_1999}. 

In this model, the unknot is better thought of as ``a composite knot with zero components'' instead of as an anomalous knot type. If this is true, then 
\begin{equation}
P_{0_1}(n) \simeq C n^{-0.19} \exp(-n/259.3) [1 + \beta n^{-\frac{1}{2}} + \gamma n^{-1} ]\label{eq:xdwform}
\end{equation}
with free parameters $C$, $\beta$, and $\gamma$ instead of 
\begin{equation}
P_{0_1}(n) \simeq \exp(-n/N) [1 + \beta n^{-\frac{1}{2}} + \gamma n^{-1} ] \label{eq:pureexpform}
\end{equation}
with free parameters $N$, $\beta$ and $\gamma$. Xiong et al.\ find that~\eqref{eq:xdwform} gives a good fit, but \eqref{eq:pureexpform} does not. 

Using PAAM, we were able to generate data for unknot probabilities for $n$ up to $3043$ (see~\autoref{tab:unknot probs}) using a desktop computer. We confirm~Xiong et al.'s finding that~\eqref{eq:xdwform} fits the data very well and we obtain similar estimates for the free parameters. However, unlike those authors, we find that~\eqref{eq:pureexpform} fits the data equally well, though with very different values for $\beta$ and $\gamma$. We conclude that more will be required to distinguish between these models. 

\section{The Progressive Action-Angle Method}

For an $n$-gon in $\R^3$, let $v_1, \dots , v_n \in \R^3$ be the coordinates of its vertices, and let $e_1, \dots , e_n$ be the edge vectors, meaning that $e_i = v_{i+1} - v_i$ for $i=1,\dots , n-1$ and $e_n = v_1 - v_n$. We will assume throughout that our polygons are equilateral, so that $|e_i| = 1$ for all $i$; equivalently, $e_1, \dots , e_n \in S^2$, the unit sphere in $\R^3$. The space $\Pol(n)$ consists of sets of edge vectors in $(S^2)^n$ which obey the closure condition $\sum_{i=1}^n e_i = 0$. 
One can show that the set 
\begin{align*}
	\Pol(n)^\times \ceq  \big\{\vec{e} \in (S^2)^n : \text{$\textstyle\sum_{i=1}^n e_i = 0$ and for all $i \neq j$: $e_i \neq e_j$ } \big\}
\end{align*}
is a $(2n-3)$-dimensional submanifold of $(S^2)^n$ and that the $(2n-3)$-dimensional Hausdorff measure of $\Pol(n) \setminus \Pol(n)^\times$ vanishes.
In this sense $\Pol(n)$ is almost everywhere a submanifold of $(S^2)^n$. 
We~may give it the submanifold metric and corresponding volume; it is equivalent to taking the $(2n-3)$-dimensional Hausdorff measure on $\Pol(n)$ with respect to the metric on $(S^2)^n$.

We focus on the quotient space ${\PolHat = \Pol(n)/\SO(3)}$. This space has a Riemannian metric---defined by the condition that the quotient map $\Pol(n) \to \PolHat$ is a Riemannian submersion---and hence a natural probability measure after normalizing the Riemannian volume form.

\begin{figure}[t]
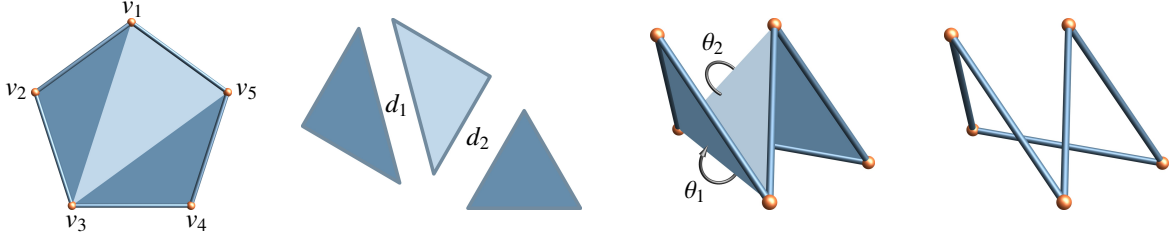

\begin{overpic}[height=1in]{{abstract_triangulation}}
\put(45,98){$v_1$}
\put(-13,57){$v_2$}
\put(17,-9){$v_3$}
\put(77,-9){$v_4$}
\put(102,57){$v_5$}
\end{overpic}
\hfill
\begin{overpic}[height=1in]{{exploded_triangles}}
\put(30,35){$d_1$}
\put(59,23){$d_2$}
\end{overpic}
\hfill
\begin{overpic}[height=1in]{{folded_triangulation}}
\put(15,5){$\theta_1$}
\put(23,71){$\theta_2$}
\end{overpic}
\hfill
\begin{overpic}[height=1in]{{polygon_only}}
\end{overpic}
\caption{Constructing an equilateral pentagon from diagonals and dihedrals. The far left shows the fan triangulation of an abstract pentagon. Given diagonal lengths $d_1$ and $d_2$ of the pentagon which obey the triangle inequalities, build the three triangles in the triangulation from their side lengths (middle left). Given dihedral angles $\theta_1$ and $\theta_2$, embed these triangles as a piecewise-linear surface in space (middle right). The far right shows the final space polygon, which is the (solid) boundary of this triangulated surface.} 
\label{fig:assembly} 
\end{figure}

Now we introduce some new coordinates on this space. Connecting the vertices $v_3, \dots, v_{n-1}$ to $v_1$, as in \autoref{fig:assembly} (far left), produces a collection of $n-3$ triangles. The shape of the triangulated surface determined by these triangles (and hence also its boundary, which is the $n$-gon) is completely determined by the lengths $d_i$ of the diagonals joining $v_1$ and $v_{i+2}$ and the dihedral angles between triangles meeting at each diagonal. Hence, we can reconstruct the surface (and hence the polygon) up to orientation from the data $d_1, \dots , d_{n-3}, \theta_1, \dots , \theta_{n-3}$, and so these give a system of coordinates for $\PolHat$. 

Indeed, as we have shown~\cite{Cantarella:2013wl}, these coordinates are natural from the symplectic geometry point of view: in that context, they are called \emph{action-angle coordinates}. Note that, while the dihedral angles can be chosen completely independently, the diagonal lengths cannot: they must obey the system of triangle inequalities
\begin{equation}
0 \leq d_1 \leq 2 
\qquad 
\begin{matrix} 
1 \leq d_i + d_{i+1} \\
-1 \leq d_{i+1} - d_i \leq 1 
\end{matrix}
\qquad
0 \leq d_{n-3} \leq 2. 
\label{eq:fan polytope}
\end{equation}

Let $\mathcal{P}_n \subset [-1,1]^{n-3}$ be the polytope defined by the inequalities~\eqref{eq:fan polytope}. If $T^{n-3} = (S^1)^{n-3}$ is the $(n-3)$-dimensional torus realized as the product of unit circles, then the action-angle coordinates are defined on $\mathcal{P}_{n-3} \times T^{n-3}$, and we have previously shown that the standard probability measure on this space---that is, the one coming from the product of Lebesgue measure on $\mathcal{P}_n$ and the standard product measure on $T^{n-3}$---is measure-theoretically equivalent to $\PolHat$:

\begin{theorem}[{Cantarella--Shonkwiler~\cite{Cantarella:2013wl}}]\label{thm:measures}
	The reconstruction map $\alpha: \mathcal{P}_n \times T^{n-3} \to \PolHat$ defining action-angle coordinates (i.e., the procedure illustrated in \autoref{fig:assembly}) is measure-preserving.
\end{theorem}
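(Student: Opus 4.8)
The plan is to recognize the coordinates of \autoref{fig:assembly} as honest action-angle coordinates for a completely integrable system on $\PolHat$, viewed as a symplectic (indeed Kähler) manifold, and then to invoke the Arnold--Liouville / Duistermaat--Heckman picture, which says precisely that the Liouville (symplectic) volume is the product of Lebesgue measure on the moment polytope with Haar measure on the torus fibers. Since $\alpha$ is the associated coordinate map, this shows that $\alpha$ carries the product measure to the Liouville measure up to a constant, and after normalizing both sides to probability measures the constant drops out. The extra ingredient needed to close the argument is that the Riemannian volume on $\PolHat$ coming from the submersion metric agrees, up to a constant, with this Liouville volume.

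To set this up, I would first give $(S^2)^n$ its product Kähler structure, each $S^2$ being a round sphere with its area form. The diagonal $\SO(3)$ action is Hamiltonian with moment map $\mu(\vec e) = \sum_{i=1}^n e_i$, so that $\Pol(n) = \mu^{-1}(0)$ and $\PolHat = \mu^{-1}(0)/\SO(3)$ is the symplectic reduction $(S^2)^n /\!\!/ \SO(3)$ at level $0$. Since Kähler reduction of a Kähler manifold is again Kähler, $\PolHat$ inherits a Kähler structure whose underlying metric is exactly the submersion metric described above (restrict the product metric to $\mu^{-1}(0)$, then push it down by the Riemannian submersion). On any Kähler manifold of complex dimension $m$ the Riemannian volume form equals $\omega^m/m!$; here $m = n-3$, so the Riemannian volume of $\PolHat$ is a constant multiple of its Liouville volume. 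This is the bridge between the metric measure in the statement and the symplectic measure.

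Next I would identify the integrable system. Following Kapovich--Millson, the diagonal-length functions $d_i = |v_1 - v_{i+2}|$ generate commuting ``bending flows'': each $d_i$ is a smooth Hamiltonian on the open dense locus where $d_i \notin \{0,2\}$, its flow bends the polygon about the $i$-th diagonal, and these flows Poisson-commute because the bending axes of the fan triangulation are disjoint. Together they define a Hamiltonian $T^{n-3}$ action, free on a dense open set of full measure, whose moment map is $(d_1,\dots,d_{n-3})$ and whose image is exactly the polytope $\mathcal{P}_n$ cut out by the triangle inequalities \eqref{eq:fan polytope}. The dihedral angles $\theta_i$ are the conjugate angle variables, so the $d_i$ serve directly as action variables. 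By the action-angle theorem the symplectic form is $\sum_i \mathrm{d}d_i \wedge \mathrm{d}\theta_i$ on this open set, hence $\omega^{n-3}/(n-3)!$ equals $\mathrm{d}d_1\cdots\mathrm{d}d_{n-3}\,\mathrm{d}\theta_1\cdots\mathrm{d}\theta_{n-3}$ up to a constant; equivalently, the pushforward of Liouville measure by the moment map is Lebesgue measure on $\mathcal{P}_n$, with torus fibers carrying Haar measure. Combined with the previous paragraph, $\alpha$ pulls the normalized Riemannian measure back to the normalized product measure.

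The main obstacle is the bridge step: verifying carefully that the Kähler structure genuinely descends through the reduction and that the resulting reduced metric coincides with the submersion metric used to define the probability measure on $\PolHat$, so that no point-dependent correction factor appears between the Riemannian and Liouville volume forms. A secondary technical point is that the torus action is free, and the $(d_i,\theta_i)$ are honest action-angle coordinates, only on the complement of the degenerate locus where some $d_i \in \{0,2\}$ (which lies near $\Pol(n)\setminus\Pol(n)^\times$); one must check that this locus has measure zero in both the domain and the target, so that the almost-everywhere identity of measures is enough to conclude that $\alpha$ is measure-preserving.
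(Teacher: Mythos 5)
Your proposal is correct and follows essentially the same route as the proof in the cited Cantarella--Shonkwiler paper: realize $\PolHat$ as the K\"ahler reduction of $(S^2)^n$ by the Hamiltonian $\SO(3)$ action, identify the Kapovich--Millson bending flows along the fan diagonals as a Hamiltonian $T^{n-3}$ action with moment map $(d_1,\dots,d_{n-3})$ and moment polytope $\mathcal{P}_n$, invoke the action-angle theorem on the dense open set where the action is free, and use the K\"ahler identity between Riemannian and Liouville volume (plus the measure-zero degenerate locus) to transfer the statement to the metric measure. The two ``obstacles'' you flag---the reduced K\"ahler metric agreeing with the submersion metric, and the measure-zero check---are exactly the points the original proof handles, so nothing essential is missing.
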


Therefore, to sample points in $\PolHat$ (that is, equilateral $n$-gons), it suffices to sample $\vec{d}$ from Lebesgue measure on $\mathcal{P}_n$ and $\vec{\theta}$ uniformly from $T^{n-3}$. Of course, the only challenge is to produce the sample $\vec{d} \in \mathcal{P}_n$. 

In~\cite{Cantarella:2016bt}, we showed how to do this efficiently. The key observation is that the consecutive differences $s_i \ceq d_{i+1} - d_i$ lie in the hypercube $[-1,1]^{n-3}$. Therefore, we can generate points in $\mathcal{P}_n$ by rejection sampling: generate proposed differences $(s_0, \dots , s_{n-4})$ uniformly from $[-1,1]^{n-3}$, and simply check whether the proposed diagonal lengths $(d_1, \dots , d_{n-3})$ given by $d_{i+1} = d_i + s_i$ with $d_0 = |v_2-v_1|=1$ satisfy~\eqref{eq:fan polytope}. 
This is surprisingly efficient:

\begin{theorem}[Cantarella--Duplantier--Shonkwiler--Uehara~\cite{Cantarella:2016bt}]\label{thm:rejection sampling probability}
	The probability that a random point $(s_0, \dots, s_{n-4}) \in [-1,1]^{n-3}$ produces a valid collection of diagonal lengths $(d_1, \dots , d_{n-3}) \in \mathcal{P}_n$ is asymptotically equivalent to $\frac{6\sqrt{6}}{\sqrt{\pi}}\frac{1}{n^{3/2}}$ as $n \to \infty$.
\end{theorem}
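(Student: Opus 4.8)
The plan is to reinterpret the probability as a survival probability for a random walk and then extract the $n^{-3/2}$ asymptotic from an invariance principle, the delicate point being the boundary behaviour. Writing $D_k \ceq d_k$, the proposed diagonals form the random walk $D_0 = 1$, $D_k = D_{k-1} + s_{k-1}$ driven by i.i.d.\ steps $s_i$ uniform on $[-1,1]$ (mean $0$, variance $\sigma^2 = \tfrac13$). The constraints~\eqref{eq:fan polytope} reduce to the bulk conditions $D_{k-1} + D_k \ge 1$ together with the terminal condition $0 \le D_{n-3} \le 2$; the conditions $\lvert D_k - D_{k-1}\rvert \le 1$ and $0 \le D_1 \le 2$ hold automatically, and a one-line argument shows that $D_{k-1}+D_k \ge 1$ and $\lvert D_k - D_{k-1}\rvert \le 1$ force $D_k \ge 0$. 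Thus $p_n$ is exactly the probability that this walk survives the soft killing imposed by $D_{k-1} + D_k \ge 1$ (equivalently, stays positive, with one-step survival probability $\min(D,1)$ from height $D$) and lands in $[0,2]$ at the final step.

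First I would isolate the structural fact that makes the constant computable: the linear function $V(d) = d$ is exactly harmonic for the killed (sub-Markov) transition operator
\[
(Tf)(d) = \tfrac12 \int_{\max(d-1,\,1-d)}^{d+1} f(d')\,dd'.
\]
Indeed, a direct computation gives $\tfrac12\int_{\max(d-1,1-d)}^{d+1} d'\,dd' = d$ for every $d \ge 0$, with \emph{no} correction in the boundary layer $0 \le d \le 1$. Consequently $D_k\,\mathbf{1}[\text{survive to }k]$ is a martingale, the effective absorbing barrier sits exactly at $0$, and the harmonic coordinate of the starting height is $V(1)=1$. By the reversal symmetry of the fan construction (triangulating from the opposite vertex), the terminal weighting is governed by the same linear function, so the endpoint contributes $\int_0^2 V(y)\,dy = \int_0^2 y\,dy = 2$.

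With the boundary data pinned down, I would obtain the asymptotics by comparing the killed walk to killed Brownian motion of variance rate $\sigma^2 = \tfrac13$ run for time $N = n-3$. The reflected heat kernel
\[
q_t(x,y) = \tfrac{1}{\sqrt{2\pi\sigma^2 t}}\Big(e^{-(y-x)^2/(2\sigma^2 t)} - e^{-(y+x)^2/(2\sigma^2 t)}\Big) \sim \frac{2xy}{\sqrt{2\pi}\,\sigma^3 t^{3/2}}
\]
as $t\to\infty$ with $x,y$ fixed, so integrating the endpoint weight over $[0,2]$ with $x=1$ yields
\[
p_n \sim \frac{2\cdot 1}{\sqrt{2\pi}\,\sigma^3 N^{3/2}}\int_0^2 y\,dy = \frac{4}{\sqrt{2\pi}\,\sigma^3 N^{3/2}} = \frac{6\sqrt6}{\sqrt\pi}\,\frac{1}{N^{3/2}} \sim \frac{6\sqrt6}{\sqrt\pi}\,\frac{1}{n^{3/2}}.
\]
The exact linearity of $V$ is precisely what guarantees that the discrete survival kernel matches the reflected heat kernel near the barrier with no multiplicative renormalization, so the naive reflection constant is the correct one.

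The main obstacle is making this comparison rigorous: I would need a local central limit theorem for the walk conditioned to stay positive that is uniform for $O(1)$ starting heights and $O(1)$ terminal windows, i.e.\ convergence of the discrete survival kernel to $q_t$ at the required precision, with explicit control of the boundary layer where $D_{k-1}+D_k \ge 1$ differs from $D_k \ge 0$. The explicit harmonic function helps substantially: performing the Doob $h$-transform by $V(d)=d$ turns the killed walk into the genuine Markov chain describing the walk conditioned to stay positive, after which the standard renewal/local-limit machinery applies and the martingale identity $\mathbb{E}[D_N;\ \text{survive}] = 1$ supplies an exact normalization to pin the constant. Establishing uniformity of the local limit theorem down to the lattice and boundary scale, rather than the $n^{-3/2}$ rate itself, is where the real work lies.
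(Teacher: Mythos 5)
First, some context: the paper you were given does not actually prove \autoref{thm:rejection sampling probability}---it imports it from~\cite{Cantarella:2016bt}, where the argument is symplectic--geometric rather than probabilistic: the acceptance probability is $\Vol(\mathcal{P}_n)/2^{n-3}$, the Duistermaat--Heckman theorem relates $\Vol(\mathcal{P}_n)$ to the Riemannian/symplectic volume of $\PolHat$, that volume has an exact sinc-integral expression, and Laplace-type asymptotics produce the constant $\frac{6\sqrt{6}}{\sqrt{\pi}}$; this is the same mechanism the present paper uses for $p(k)$ in \autoref{prop:pk} and \autoref{prop:sinc-integral-estimate}. Your route is genuinely different, and its preparatory steps are correct and rather elegant. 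The reduction to a killed walk is right (including the remark that survival forces $D_k \ge 0$), and your central structural claim checks out: for the sub-Markov kernel $k(x,y)=\tfrac12\mathbf{1}\left[\abs{x-y}\le 1,\ x+y\ge 1\right]$ one has $\tfrac12\int_{\max(d-1,\,1-d)}^{d+1}y\,\mathrm{d}y=d$ in both regimes $d\ge 1$ and $0\le d<1$, so $V(d)=d$ is \emph{exactly} harmonic and $\mathbb{E}\left[D_N;\ \text{survive}\right]=1$ exactly. (The cleaner justification for the terminal weighting is simply that $k$ is symmetric in $(x,y)$, so the time-reversed kernel is the same kernel; no appeal to the fan construction is needed.) The Brownian constant also comes out right: with $\sigma^2=\tfrac13$, $\frac{4}{\sqrt{2\pi}\,\sigma^3}=\frac{6\sqrt{6}}{\sqrt{\pi}}$, matching the theorem.

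However, there is a genuine gap, and it is the one you flag yourself: everything rests on a local limit theorem saying that the killed density $P_1\!\left(\text{survive to }N,\ D_N\in\mathrm{d}y\right)/\mathrm{d}y$ is asymptotic to $\frac{2y}{\sqrt{2\pi}\,\sigma^3 N^{3/2}}$ uniformly for $y\in[0,2]$, and this cannot be quoted from the standard literature (Caravenna, Vatutin--Wachtel, Doney, \dots). Those results concern walks killed at first passage below a level, where killing depends on the current position alone; your killing event $D_{k-1}+D_k<1$ couples consecutive positions, so conditionally on survival the jump law itself is distorted in the layer $0\le D\le 1$. The distinction is not cosmetic: for genuine first-passage killing below $0$ the function $V(x)=x$ is \emph{not} harmonic (for $0\le x<1$ one gets $\tfrac14(x+1)^2\neq x$), so sandwiching your walk between the walks killed below $0$ and below $1$ yields the right order $N^{-3/2}$ but the wrong constants on both sides. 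Hence the local limit theorem must be proved for your kernel specifically---e.g.\ by a Denisov--Wachtel-style argument for killed Markov chains, where your exact harmonic function would indeed be the key input, or by redoing the renewal/Wiener--Hopf analysis with boundary-layer control. Until then the proposal is a correct reduction plus a convincing heuristic for the constant, not a proof. If you want a complete argument at low cost, note that the moment-polytope mechanism of \autoref{prop:pk} applies to $\mathcal{P}_n$ itself and yields an exact integral formula, after which only classical asymptotics are needed; that is essentially the proof in~\cite{Cantarella:2016bt}.
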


In the above and throughought the rest of the paper, we say that $g(n)$ and $h(n)$ are \emph{asymptotically equivalent}, denoted $g(n) \sim h(n)$, if $\lim_{n \to \infty} \frac{g(n)}{h(n)} = 1$.

Since the time it takes to generate points in $[-1,1]^{n-3}$ is linear in $n$, rejection sampling the hypercube yields a valid point in $\mathcal{P}_n$ in expected time $\Theta(n^{5/2})$. The steps of generating dihedral angles and assembling the $n$-gon from $(d_1,\dots , d_{n-3})$ and $(\theta_1, \dots , \theta_{n-3})$ do not affect the time bound since they are both linear in $n$. Therefore, this gives a numerically stable algorithm for generating random equilateral $n$-gons in expected time $\Theta(n^{5/2})$.
We called it the \emph{action-angle method} (AAM).

We now give an improved version of this method. By fixing $d_0 = 1$ and generating proposed consecutive differences $s_i$ uniformly from $[-1,1]^{n-3}$, we guarantee that the inequalities $0 \leq d_1 \leq 2$ and $-1 \leq d_{i+1} - d_i \leq 1$ for $i=1,\dots , n-4$ are automatically satisfied. The final inequality $0 \leq d_{n-3} \leq 2$ can only be checked at the very end, but the inequalities $1 \leq d_i + d_{i+1}$ can be checked one at a time as each $s_i$ is generated, and we can abort and start over as soon as one of these inequalities fails. We will show below that the expected number of coordinates that get generated before failure is $\Theta(\!\sqrt{n})$, yielding an overall time bound of $\Theta(n^2)$. \autoref{alg:PAAM} gives an implementation of our approach, which we call the \emph{\newmethod} (PAAM). A~reference C implementation of this algorithm is included in \texttt{plCurve}~\cite{plcurve}, where it is now the default algorithm for producing random equilateral polygons in $\R^3$.

\section{Complexity}

Let $\mathcal{I}(n)$ be expected number of coordinates $s_i$ that are generated in the innermost loop before either failing one of the $d_i + d_{i+1} \geq 1$ inequalities (and resetting $i$ to $0$) or passing all of them (when $i=n-3$).
Since we know from \autoref{thm:rejection sampling probability} that the overall acceptance probability is $\sim \frac{6 \sqrt{6}}{\sqrt{\pi}} \frac{1}{n^{3/2}}$, the expected number of times we will have to re-start the innermost loop is $\Theta(n^{3/2})$. 
Multiplying these quantities yields $\Theta(n^{3/2} \mathcal{I}(n))$ for the expected time to produce a valid list of diagonals. The postprocessing steps of generating dihedral angles and assembling the polygon are both linear in~$n$, so they do not affect the overall time bound. 

\begin{algorithm}[H]
\begin{algorithmic}
\Procedure{ProgressiveActionAngleMethod}{$n$}\Comment{Generate closed equilateral $n$-gon}
\State $d_0 \gets 1$
\State $i \gets 0$
\Repeat
	\Repeat
		\State $s_{i} \gets $\Call{UniformRandom}{$[-1,1]$}
		\State $d_{i+1} \gets d_i \,+ s_{i}$
		\If{$d_i + d_{i+1} < 1$}
			\State $i \gets 0$
		\Else
			\State $i \gets i+1$
		\EndIf
	\Until{$i=n-3$}
\Until{$0 \leq d_{n-3} \leq 2$}
\State Sample $n-3$ i.i.d.\ dihedral angles $\theta_i$ uniformly from $[0,2\pi)$.
\State Reconstruct $P$ from diagonals $d_1, \dots, d_{n-3}$ and dihedrals $\theta_1, \dots, \theta_{n-3}$.
\EndProcedure
\end{algorithmic}
\caption{Progressive action-angle method}
\label{alg:PAAM}
\end{algorithm}

We now show
\begin{proposition}\label{prop:sum-asymptotics}
	$\mathcal{I}(n) \sim \sqrt{\frac{24n}{\pi}}.$
\end{proposition}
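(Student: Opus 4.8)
The plan is to substitute the asymptotic expansion of $p(k)$ from \autoref{prop:sinc-integral-estimate} directly into the exact formula $\mathcal{I}(n) = \sum_{k=0}^{n-3} p(k)$ from~\eqref{eq:fn} and evaluate the resulting sum termwise. First I would peel off the $k=0$ term, which equals $p(0) = 1$ by \autoref{def:pi} and hence contributes only $O(1)$; this is harmless because the sum we are after will grow like $\sqrt{n}$. For the remaining terms I would write $p(k) = \sqrt{\tfrac{6}{\pi k}} + O(k^{-3/2})$, splitting $\sum_{k=1}^{n-3} p(k)$ into a main term $\sqrt{\tfrac{6}{\pi}}\sum_{k=1}^{n-3} k^{-1/2}$ and an error term $\sum_{k=1}^{n-3} O(k^{-3/2})$.

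The error term is the easy part: since $\sum_{k=1}^{\infty} k^{-3/2} = \zeta(3/2) < \infty$ converges, the tail is bounded and the whole error sum is $O(1)$, negligible against the main term. For the main term I need the asymptotics of the partial sum $\sum_{k=1}^{m} k^{-1/2}$. I would obtain this from integral comparison (or a single step of Euler--Maclaurin), giving $\sum_{k=1}^{m} k^{-1/2} = 2\sqrt{m} + O(1)$. Taking $m = n-3$ and using $\sqrt{n-3} \sim \sqrt{n}$ then yields $\sqrt{\tfrac{6}{\pi}}\sum_{k=1}^{n-3} k^{-1/2} \sim 2\sqrt{\tfrac{6}{\pi}}\,\sqrt{n}$.

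Finally I would simplify the constant: $2\sqrt{\tfrac{6}{\pi}} = \sqrt{\tfrac{24}{\pi}}$, so the main term is asymptotic to $\sqrt{\tfrac{24 n}{\pi}}$. Collecting the three contributions, the constant $p(0)=1$ and the $O(1)$ error are both dominated by the $\Theta(\sqrt{n})$ main term, and therefore $\mathcal{I}(n) \sim \sqrt{\tfrac{24 n}{\pi}}$, as claimed.

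I do not expect a genuine obstacle here; the argument is a routine asymptotic estimate of a sum. The only points requiring a little care are confirming that the $O(k^{-3/2})$ bound from \autoref{prop:sinc-integral-estimate} is uniform enough in $k$ to be summed, so that the error contribution really is $O(1)$, and making sure the constant in $\sum_{k=1}^{m} k^{-1/2} \sim 2\sqrt{m}$ is tracked correctly so that it combines with $\sqrt{6/\pi}$ to produce exactly $\sqrt{24/\pi}$ rather than an off-by-a-factor constant. Both are standard and should present no difficulty.
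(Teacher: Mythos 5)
Your proposal is correct and follows essentially the same route as the paper's own proof: both peel off the $k=0$ term, split the sum via \autoref{prop:sinc-integral-estimate} into a main term handled by integral comparison (giving $\sum_{k=1}^{n-3} k^{-1/2} \sim 2\sqrt{n}$, hence $\sqrt{24n/\pi}$) and an error term that is $O(1)$ as a partial sum of a convergent series. There is nothing to add; the constant-tracking you flag as the only delicate point works out exactly as you describe.
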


\begin{proof}
Let $p(k)$ be the probability that the proposed diagonal lengths generated by a random point in $\vec{s} \in [-1,1]^{n-3}$ satisfy each of the first $k$ inequalities $d_0 + d_1 \geq 1, \dots , d_{k-1} + d_k \geq 1$. In other words, $p(k)$ (thought of as a function of $k$) is the survival function for the distribution of the index of the first inequality which fails. For ease of notation, we also declare $p(0) \ceq 1$. By a standard integration by parts argument (see, for example,~\cite[Exercise~1.7.2]{durrett_probability_2019}), the expected value of a~nonnegative random variate is the integral of the survival function, so $\mathcal{I}(n) = \sum_{k=0}^{n-3} p(k)$. We will now show that for any nonnegative integer $k$, 
	\[
		p(k) = \frac{2}{\pi}\int_0^\infty \sinc^{k+1}(t) \dt,
	\]
	where $\sinc(t) = \frac{\sin(t)}{t}$ for $t \neq 0$ and $\sinc(0) = 1$ is the sinc function.
	
For each $k$, we denote by $\mathcal{S}_k \subset [-1,1]^k$ the subset of points $\vec{s}$
that satisfy the inequalities $d_0 + d_1 \geq 1, \dots , d_{k-1} + d_k \geq 1$. 
As a simple calculation confirms, the affine transformation from $\vec{s}$ to $\vec{d}$ given by $d_i = 1 + \sum_{j=0}^{i-1} s_j$ is volume-preserving. Let $\mathcal{Q}_k$ be the image of $\mathcal{S}_k$ under this map; i.e., $\mathcal{Q}_k$ is the polytope of $(d_1, \dots , d_k)$ satisfying the inequalities $d_0 + d_1 \geq 1, \dots , d_{k-1} + d_k \geq 1$ (again, recall that $d_0 = 1$, which implies in particular that all the $d_i$ are nonnegative). 
Since $p(k) = \frac{\Vol(\mathcal{S}_k)}{\Vol([-1,1]^k)} = \frac{\Vol(\mathcal{S}_k)}{2^k}$, to prove the above formula for $p(k)$ it suffices to prove that
	\[
		\Vol(\mathcal{Q}_k) = 2^k \frac{2}{\pi} \int_0^\infty \sinc^{k+1}(t) \dt.
	\]
	
	Notice that the defining inequalities for $\mathcal{Q}_k$ are precisely the inequalities~\eqref{eq:fan polytope} \emph{except} the last inequality $0 \leq d_k \leq 2$. This suggests that these inequalities may just be the diagonal lengths of an equilateral polygonal path in $\R^3$ which is not required to close up. 
	
	More precisely, let 
	\[
		\APol(k) \ceq \Big\{(e_1, \dots , e_{k+1}) \in (S^2)^{k+1} : z_1 + \dots + z_{k+1} = 0 \Big\}/\SO(2),
	\]
	where $e_i = (x_i,y_i,z_i)$, so that the defining condition says that the path starts and ends in the \mbox{$xy$-plane}, and $\SO(2)$ acts by simultaneously rotating all edges around the $z$-axis; this is the diagonal subgroup of the $T^{k+1}= \SO(2)^{k+1}$ action on $(S^2)^{k+1}$ which rotates edges around the $z$-axis. $\APol(k)$ is the space of \emph{abelian polygons} introduced by Hausmann and Knutson~\cite{Hausmann:1998vx}, and we will see that it admits two different effective, Hamiltonian $T^k$ actions, as shown in~\autoref{fig:actions}.

\begin{figure}[t]
\hfill
\begin{overpic}[height=0.225\textheight]{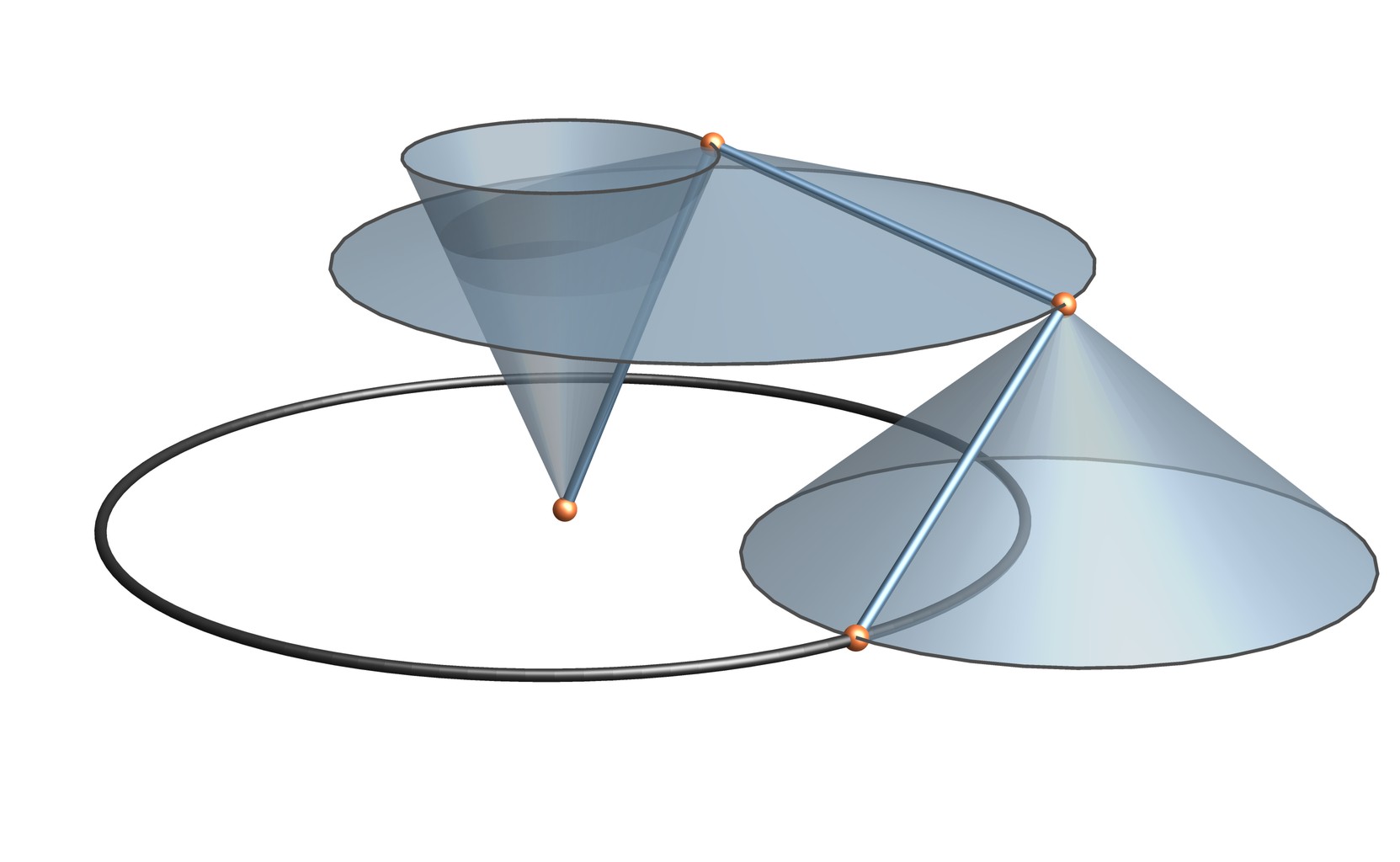}
	\put(34,23){$v_1$}
	\put(50,55){$v_2$}
	\put(80,40){$v_3$}
	\put(60,10){$v_4$}
\end{overpic}%
\hfill
\begin{overpic}[height=0.225\textheight]{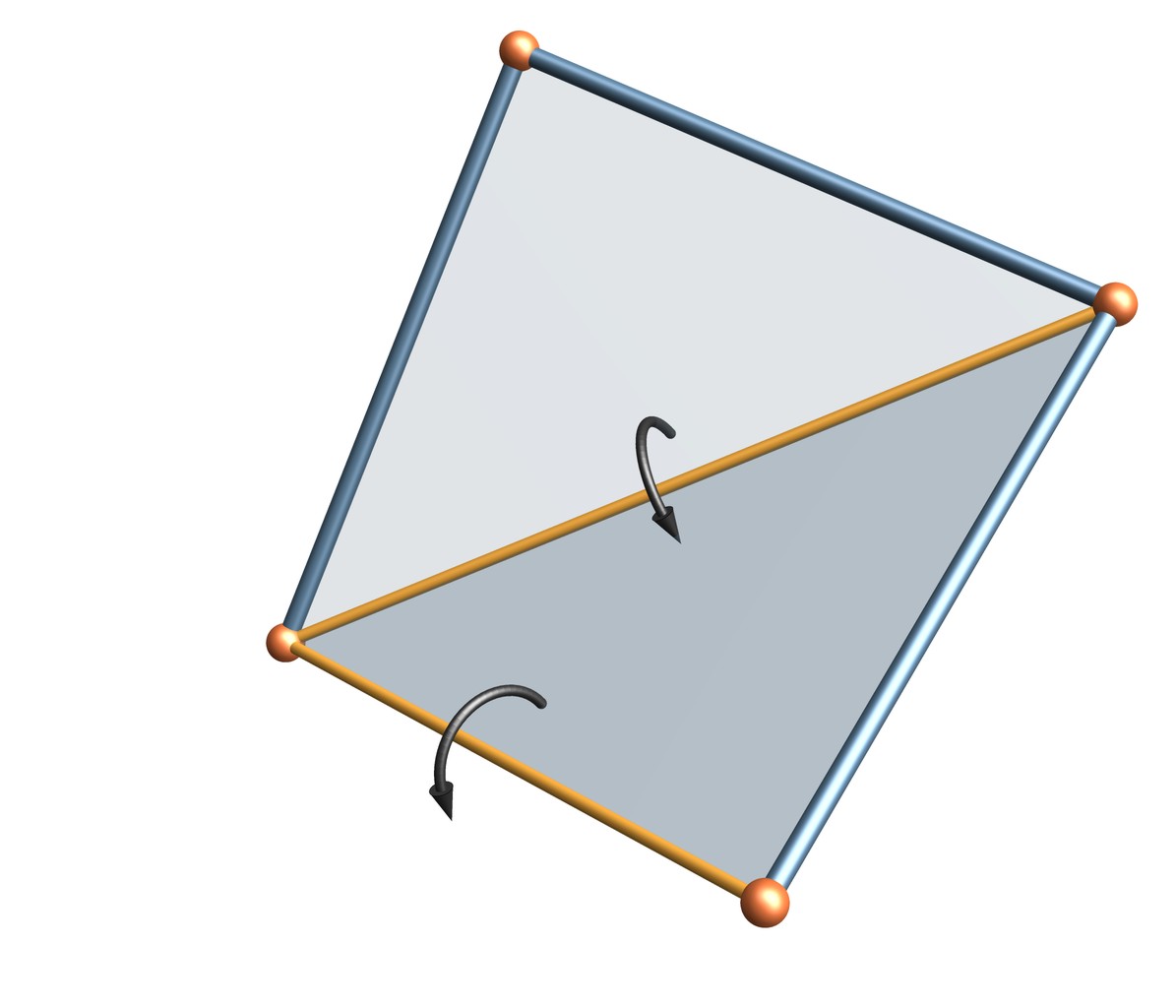}
	\put(15,25){$v_1$}
	\put(32,80){$v_2$}
	\put(92,67){$v_3$}
	\put(72,5){$v_4$}
\end{overpic}%
\hfill\hphantom{.}
\caption{Here we see both torus actions on the space $\APol(2)$. On the left, we may rotate each of the three edges (independently) around the $z$-axis, sweeping out three cones. Then we identify configurations which are the same under the diagonal circle action rotating the entire configuration around the $z$-axis (indicated by the dark circle in the $xy$-plane). On the right, we may rotate the first two edges around the diagonal joining vertices $v_1$ and $v_3$ or rotate the entire polygon around the diagonal joining $v_1$ and $v_4$. In each case, this is a Hamiltonian $2$-torus action on $\APol(2)$.} 
\label{fig:actions} 
\end{figure}

	The first is the residual $T^{k+1}/\SO(2) \simeq T^k$ action above. The moment map for this action simply records the $z$-coordinates of the edges; since the defining equation $z_1 + \dots + z_{k+1} = 0$ implies that $z_{k+1}$ is determined by the remaining $z_i$'s, the last coordinate can be dropped and we can think of the moment map as recording the vector $(z_1, \dots , z_k)$. Let $\mathcal{H}_k$ be the image of this map; that is, the moment polytope for this torus action. Of course, $-1 \leq z_i \leq 1$ for all $i$, and, since $-1 \leq z_{k+1} \leq 1$, we see that the defining inequalities of $\mathcal{H}_k$ are
	\[
		-1 \leq z_i \leq 1 \text{ for all } i =1, \dots, k \quad \text{and} \quad -1 \leq z_1 + \dots + z_k \leq 1.
	\]
	In other words, $\mathcal{H}_k$ is the central slab of the hypercube $[-1,1]^k$ of points whose coordinates sum to between $-1$ and $1$. This is a well-studied polytope, and its volume has been known at least since P\'olya~\cite{polya_berechnung_1913} to be
	\[
		\Vol(\mathcal{H}_k) = 2^k \frac{2}{\pi}\int_0^\infty \sinc^{k+1}(t) \dt
	\]
	(see also Borwein, Borwein, and Mares~\cite{Borwein:2002fj} for generalizations of the above formula).
	
	On the other hand, we get a $T^k$ action on $\APol(k)$ which is analogous to the bending flows on $\PolHat$ described above and in more detail in~\cite{Cantarella:2013wl}. Specifically, the $i$th $\SO(2)$ factor acts by rotating the first $i+1$ edges of the polygonal arm around the $i$th diagonal, which is the axis through $v_1$ and $v_{i+2}$. Just as in the case of $\PolHat$, the moment map records the lengths $d_1, \dots , d_k$ of the diagonals, and the image of the moment map is precisely $\mathcal{Q}_k$.
	
	But now we've realized $\APol(k)$ as a toric symplectic manifold in 2 ways, with moment polytopes $\mathcal{H}_k$ and $\mathcal{Q}_k$. Since the Duistermaat--Heckman theorem~\cite{Duistermaat:1982hq} implies that the volume of $\APol(k)$ must be the product of the volume $(2\pi)^k$ of the torus $T^k$ and the volume of either of its moment polytopes, it follows that
	\[
		\Vol(\mathcal{Q}_k) = \Vol(\mathcal{H}_k) = 2^k \frac{2}{\pi}\int_0^\infty \sinc^{k+1}(t) \dt,
	\]
	as desired. We can now estimate the integral using classical methods~\cite[pp.~172--173]{laplace_theorie_1820} (see also~\cite{medhurst_evaluation_1965}) to get $p(k) = \sqrt{\frac{6}{\pi k}} + O\left(\frac{1}{k^{3/2}}\right)$
and the result follows.
\end{proof}

%
%
%
%

Therefore, $\Theta(n^{3/2} \mathcal{I}(n)) = \Theta(n^2)$ and we have proved a sharp complexity bound on the \newmethod:

\begin{theorem}\label{thm:main bound}
	The \newmethod\ generates uniform random samples of closed, equilateral $n$-gons in expected time $\Theta(n^2)$.
\end{theorem}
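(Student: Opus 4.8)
The plan is to separate the statement into a correctness claim and a running-time claim, since all of the analytic work is already in place. For correctness, I would observe that the progressive scheme accepts a proposal $\vec s \in [-1,1]^{n-3}$ exactly when every inequality in \eqref{eq:fan polytope} holds, i.e.\ exactly when the associated diagonals satisfy $\vec d \in \mathcal{P}_n$; the early-abort logic only governs \emph{when} a doomed proposal is discarded and so does not alter the law of an accepted sample. Hence the accepted $\vec d$ is uniform on $\mathcal{P}_n$ and the independently drawn $\vec\theta$ is uniform on $T^{n-3}$, so by \autoref{thm:measures} the reconstructed polygon is distributed according to the normalized Riemannian volume on $\PolHat$. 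This part is pure bookkeeping.

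For the running time I would model the algorithm as a sequence of independent, identically distributed \emph{proposals}. Each proposal draws $s_0, s_1, \dots$ one at a time, aborting as soon as a lower inequality $d_{i-1}+d_i \ge 1$ fails and, if it reaches index $n-3$, testing $0 \le d_{n-3} \le 2$. Every draw-and-test costs $O(1)$, so the cost of a proposal is a constant times its number of inner-loop steps $T$, and by \autoref{prop:sum-asymptotics} we have $\mathbb{E}[T] = \mathcal{I}(n) \sim \sqrt{24n/\pi} = \Theta(\!\sqrt n)$. A proposal is \emph{accepted} precisely when it survives all of \eqref{eq:fan polytope}, which by \autoref{thm:rejection sampling probability} happens with probability $p_{\mathrm{accept}} \sim \tfrac{6\sqrt6}{\sqrt\pi}\,n^{-3/2}$; consequently the number $N$ of proposals until the first acceptance is geometric with $\mathbb{E}[N] = 1/p_{\mathrm{accept}} = \Theta(n^{3/2})$.

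The step I expect to require the most care is combining these two facts, because the total work to produce a valid diagonal vector is the \emph{random} sum $\sum_{i=1}^N T_i$ of per-proposal costs, and one cannot simply multiply expectations without justification. I would invoke Wald's identity: the proposals are i.i.d., and the event $\{N \ge i\}$ that proposals $1, \dots, i-1$ all failed depends only on those earlier proposals and is therefore independent of $T_i$, whence $\mathbb{E}\big[\sum_{i=1}^N T_i\big] = \mathbb{E}[N]\,\mathbb{E}[T] = \Theta(n^{3/2})\cdot\Theta(\!\sqrt n) = \Theta(n^2)$. Finally, drawing the $n-3$ dihedral angles and reconstructing the polygon are each $\Theta(n)$ and are performed only once, so they are absorbed into the $\Theta(n^2)$ bound. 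The genuinely hard content---the exact and asymptotic evaluation of $p(k)$ and of its partial sums---has already been carried out in \autoref{prop:pk}, \autoref{prop:sinc-integral-estimate}, and \autoref{prop:sum-asymptotics}, so for this theorem it remains only to assemble these pieces with the probabilistic argument just sketched.
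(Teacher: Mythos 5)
Your proposal is correct and takes essentially the same route as the paper: the paper likewise multiplies the expected number $\Theta(n^{3/2})$ of outer-loop iterations (from \autoref{thm:rejection sampling probability}) by the expected inner-loop cost $\mathcal{I}(n) \sim \sqrt{24n/\pi} = \Theta(\sqrt{n})$ (from \autoref{prop:sum-asymptotics}), and notes that the $\Theta(n)$ postprocessing does not affect the bound. Your explicit appeal to Wald's identity for the random sum $\sum_{i=1}^{N} T_i$, and the explicit uniformity argument via rejection sampling and \autoref{thm:measures}, simply make rigorous two steps that the paper performs implicitly.
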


\section{Experiments}
\label{sec:experiments}

As in~\cite{Xiong2021}, we classified each random polygon $P$ as knotted or unknotted based on the three invariants 
$\Delta_2(P)$, $\Delta_3(P)$, and $\Delta_4(P)$, which are absolute values of the Alexander polynomial at roots of unity. The $\Delta_i$ are determinants of principal minors of an $m \times m$ Alexander matrix $A(P)$~\cite{AdamsKnotBook}, where $m$ is the number of crossings in a projection of $P$ to a plane. The matrix $A(P)$ is usually large, since the expected value of $m$ is $\sim \frac{3}{16} n \log n$~\cite{DiaoDobayKusnerMillett2003}, but sparse, with at most $3$ nonzero entries in each row. 
We computed $\det (A^T A)$ by using a sparse Cholesky factorization: 
First we determined a fill-in reducing reordering of the matrix $A^TA$ with the \emph{Approximate Minimum Degree} algorithm \cite{10.1145/1024074.1024081,doi:10.1137/S0895479894278952}
(routine \texttt{amd\textunderscore{}order} from the \emph{SuiteSparse} library). 
Then we factorized the reordered matrix and extracted the diagonal entries of the factor matrix with our own code~\cite{Tensors}. Finally we computed the determinant as the product of these diagonal entries.

Our first experiment compared the empirical performance of AAM, PAAM, and invariant computation on an Apple M1 Max laptop ($8$~cores in parallel). Fitting the data shown in~\autoref{fig:timing} confirmed the expected runtime of $O(n^{5/2})$ for AAM and $O(n^2)$ for PAAM and revealed that invariant computation seems to scale as $O(n^{1.18})$.

\begin{figure}[t]
	\centering
		\includegraphics[height=2.5in]{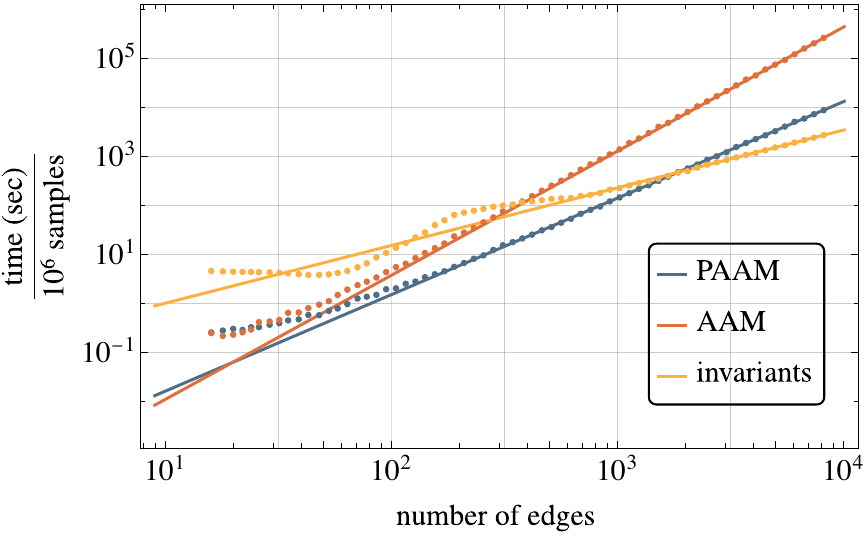}
	\caption{This plot compares time per million samples to $n$ for random equilateral $n$-gons with $n=\operatorname{round}(2^{k/7})$ from $n=16$ to $8192$ using the action-angle method (AAM) and the progressive action-angle method (PAAM). The time needed to generate samples scales as predicted by \autoref{thm:rejection sampling probability} and~\autoref{thm:main bound}. We compare these to the time required for the computation of all three invariants $\Delta_2$, $\Delta_3$ and $\Delta_4$. Invariant computation shows some small-$n$ effects. The most important one is that it is faster to use a dense matrix code to compute the determinants for $n < 128$. Dense matrix determinants scale as $O(n^3)$, so this portion of the graph is steeper. For $n < 64$ the determinant computation is so fast that (constant time) computational overhead is the dominant factor in the computation, so this portion of the timing curve is flat. AAM is faster than invariant computation for $n < 280$ while PAAM is faster than invariant computation for $n < 1850$.}
	\label{fig:timing}
\end{figure}

Our second experiment recomputed unknot probabilities using our invariant code. Since unknotted $n$-edge polygons are very uncommon for large $n$, we used inverse sampling~\cite{Lui2004}. We sampled polygons with $n = \operatorname{round}(2^{k/7})$ edges for integer $k \in [28,81]$, covering the range $n=16$ to $n=3043$. For each $n$, we sampled until observing $600$ polygons classified as unknots, computing between $608$ and $\sim 10^8$ samples to do so. The computations took a total of 58 hours on an Apple M1 Ultra personal computer with $16$~CPU cores. Using Bennett's approximation for the negative binomial distribution~\cite{Bennett1981}, we computed $95\%$ confidence intervals for our unknot probabilities of relative size about $8\%$ of the probability (see~\autoref{tab:unknot probs}). 

We then fit the data to models of the form~\eqref{eq:xdwform} and~\eqref{eq:pureexpform}. As shown in~\autoref{fig:fits}, model~\eqref{eq:xdwform} provides an excellent ($R^2 > 0.9999$) fit to the data, with best-fit parameters $C \simeq 3.63 \pm 0.09$, $\beta \simeq 3.8 \pm 0.4$, and $\gamma \simeq 7.2 \pm 1.7$. Xiong et al. measured $C \simeq 3.67$, $\beta \simeq 3.8$ and $\gamma \simeq 8.3$ for this model~\cite{Xiong2021}, all of which are within our confidence intervals, so we replicate their finding that this model explains the data. (We note that the confidence intervals for the parameters derive from our sampling uncertainty.) On the other hand, the same authors report a poor fit to model~\eqref{eq:pureexpform}. In contrast, we find that setting $N \simeq 251.3 \pm 0.8$, $\beta \simeq -0.51 \pm 0.35$, and $\gamma \simeq -1.2 \pm 1.9$ in model~\eqref{eq:pureexpform} fits our data equally well. We~conclude that more will be needed to distinguish between~\eqref{eq:xdwform} and \eqref{eq:pureexpform}.

\begin{figure}[ht]
\hphantom{.}
\hfill
\begin{overpic}[width=0.45\textwidth]{../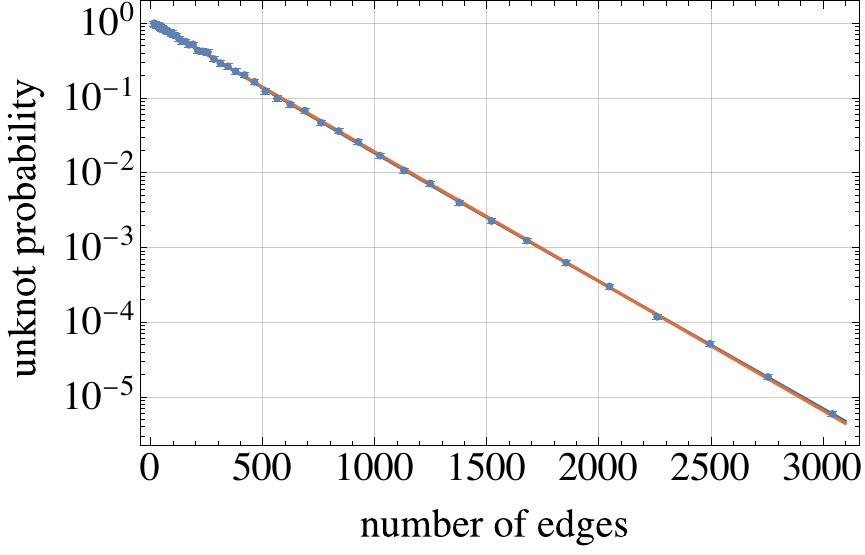}
\end{overpic}
\hfill
\begin{overpic}[width=0.45\textwidth]{../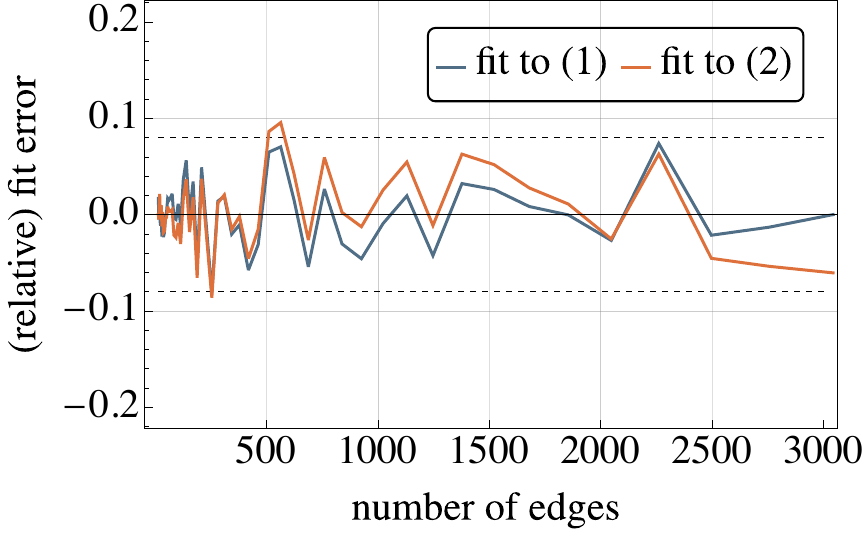}
\end{overpic}
\hfill
\hphantom{.}
\caption{\label{fig:fits} At left, we show a log plot of the probability that a given $n$-edge equilateral polygon has $\Delta_2$, $\Delta_3$, and $\Delta_4$ all equal to $1$ together with $95\%$ confidence intervals. Following~\cite{Xiong2021}, we use this as a proxy for the unknot probability.  We also show our best fits to~\eqref{eq:xdwform} and~\eqref{eq:pureexpform}. At right, we show the relative fit errors together with the measurement uncertainty (dotted lines). We see that both models fit the data extremely well.}
\end{figure}

\begin{table}[bp]

\begin{tabular}{l@{\hskip 0.125in}l@{\hskip 0.25in}l@{\hskip 0.125in}l@{\hskip 0.25in}l@{\hskip 0.125in}l}
 \toprule
$n$&$P_{0_1}(n)$&$n$&$P_{0_1}(n)$&$n$&$P_{0_1}(n)$\\
\midrule
$16$ & $0.99\pm 0.08$ & $95$ & $0.73\pm 0.06$ & $565$ & $0.098\pm 0.008$ \\
 $18$ & $0.99\pm 0.08$ & $105$ & $0.69\pm 0.06$ & $624$ & $0.082\pm 0.007$ \\
 $20$ & $0.96\pm 0.08$ & $116$ & $0.67\pm 0.05$ & $689$ & $0.067\pm 0.005$ \\
 $22$ & $0.95\pm 0.08$ & $128$ & $0.61\pm 0.05$ & $761$ & $0.046\pm 0.004$ \\
 $24$ & $0.95\pm 0.08$ & $141$ & $0.57\pm 0.05$ & $840$ & $0.0358\pm 0.0029$ \\
 $26$ & $0.95\pm 0.08$ & $156$ & $0.57\pm 0.05$ & $927$ & $0.0257\pm 0.0021$ \\
 $29$ & $0.93\pm 0.07$ & $172$ & $0.51\pm 0.04$ & $1024$ & $0.0168\pm 0.0013$ \\
 $32$ & $0.94\pm 0.08$ & $190$ & $0.52\pm 0.04$ & $1131$ & $0.0107\pm 0.0009$ \\
 $35$ & $0.91\pm 0.07$ & $210$ & $0.430\pm 0.034$ & $1248$ & $0.0071\pm 0.0006$ \\
 $39$ & $0.92\pm 0.07$ & $232$ & $0.418\pm 0.033$ & $1378$ & $0.00396\pm 0.00032$ \\
 $43$ & $0.90\pm 0.07$ & $256$ & $0.406\pm 0.032$ & $1522$ & $0.00225\pm 0.00018$ \\
 $48$ & $0.87\pm 0.07$ & $283$ & $0.329\pm 0.026$ & $1680$ & $0.00123\pm 0.00010$ \\
 $53$ & $0.85\pm 0.07$ & $312$ & $0.290\pm 0.023$ & $1855$ & $0.00062\pm 0.00005$ \\
 $58$ & $0.82\pm 0.07$ & $345$ & $0.264\pm 0.021$ & $2048$ & $0.000300\pm 0.000024$ \\
 $64$ & $0.81\pm 0.06$ & $380$ & $0.226\pm 0.018$ & $2261$ & $0.000118\pm 0.000009$ \\
 $71$ & $0.78\pm 0.06$ & $420$ & $0.201\pm 0.016$ & $2497$ & $0.000051\pm 0.000004$ \\
 $78$ & $0.76\pm 0.06$ & $464$ & $0.164\pm 0.013$ & $2756$ & $0.0000184\pm 0.0000015$ \\
 $86$ & $0.76\pm 0.06$ & $512$ & $0.122\pm 0.010$ & $3043$ & $(5.9\pm 0.5)\times 10^{-6}$ \\ \bottomrule
\end{tabular}

\caption{\label{tab:unknot probs}This table gives the probability of finding a random polygon with $\Delta_2$, $\Delta_3$ and $\Delta_4$ all equal to $1$, with error given by the $95\%$ confidence interval. We take this to be a proxy for the probability $P_{0_1}(n)$ of finding an unknot among random $n$-edge equilateral polygons. We expect that these numbers are highly accurate; checking each candidate unknot with SnapPy~\cite{SnapPy} revealed only 8 instances (of $32,\!400$) in which a nontrivial knot was mistaken for an unknot. We do not include this in the error bars shown.}

\end{table}

\section*{Acknowledgments}
We are very grateful for the On-Line Encyclopedia of Integer Sequences~\cite{oeis}, without which progress on this paper would have been much slower. Thanks to the anonymous referees for their thoughtful suggestions which substantially improved this paper, to Bertrand Duplantier for helping us think asymptotically, to Tetsuo Deguchi for a helpful discussion of probability models for knots, to Nathan Dunfield for suggesting that we check the unknot candidates with SnapPy. In addition, we would like to acknowledge the generous support of the National Science Foundation (DMS--2107700 to Shonkwiler), the Simons Foundation (\#524120 to Cantarella, \#709150 to Shonkwiler) and the Banff International Research Station (Workshop~24w5217, \emph{Knot Theory Informed by Random Models and Experimental Data}).

\newpage

\bibliography{tcrwpapers-special,extra-jason-refs,tcrwpapers,software}

\end{document}